\begin{document}

\title{xMTF: A Formula-Free Model for Reinforcement-Learning-Based Multi-Task Fusion in Recommender Systems}

\author{Yang Cao}
\authornote{These authors contributed equally, and this work was finished when Changhao Zhang served as a research intern in Kuaishou Technology.}

\affiliation{%
  \institution{Kuaishou Technology}
  \city{Beijing}
  \country{China}
}
\email{caoyang10@kuaishou.com}

\author{Changhao Zhang}
\authornotemark[1]
\affiliation{%
  \institution{Peking University}
  \city{Beijing}
  \country{China}
}
\email{zhangchanghao@stu.pku.edu.cn}

\author{Xiaoshuang Chen}
\authornote{Corresponding author.}
\orcid{0000-0003-1267-1680}
\affiliation{%
  \institution{Kuaishou Technology}
  \city{Beijing}
  \country{China}
}
\email{chenxiaoshuang@kuaishou.com}

\author{Kaiqiao Zhan}
\affiliation{%
  \institution{Kuaishou Technology}
  \city{Beijing}
  \country{China}
}
\email{zhankaiqiao@kuaishou.com}

\author{Ben Wang}
\authornotemark[2]
\affiliation{%
  \institution{Kuaishou Technology}
  \city{Beijing}
  \country{China}
}
\email{wangben@kuaishou.com}

\renewcommand{\shortauthors}{Yang Cao, Changhao Zhang, Xiaoshuang Chen, Kaiqiao Zhan, \& Ben Wang}
\begin{abstract}
Recommender systems need to optimize various types of user feedback, e.g., clicks, likes, and shares. A typical recommender system handling multiple types of feedback has two components: a multi-task learning (MTL) module, predicting feedback such as click-through rate and like rate; and a multi-task fusion (MTF) module, integrating these predictions into a single score for item ranking. MTF is essential for ensuring user satisfaction, as it directly influences recommendation outcomes. Recently, reinforcement learning (RL) has been applied to MTF tasks to improve long-term user satisfaction. However, existing RL-based MTF methods are formula-based methods, which only adjust limited coefficients within pre-defined formulas. The pre-defined formulas restrict the RL search space and become a bottleneck for MTF. To overcome this, we propose a formula-free MTF framework. We demonstrate that any suitable fusion function can be expressed as a composition of single-variable monotonic functions, as per the Sprecher Representation Theorem. Leveraging this, we introduce a novel learnable monotonic fusion cell (MFC) to replace pre-defined formulas. We call this new MFC-based model eXtreme MTF (xMTF). Furthermore, we employ a two-stage hybrid (TSH) learning strategy to train xMTF effectively. By expanding the MTF search space, xMTF outperforms existing methods in extensive offline and online experiments.
\end{abstract}

\begin{CCSXML}
<ccs2012>
   <concept>
       <concept_id>10002951.10003317.10003347.10003350</concept_id>
       <concept_desc>Information systems~Recommender systems</concept_desc>
       <concept_significance>500</concept_significance>
       </concept>
 </ccs2012>
\end{CCSXML}

\ccsdesc[500]{Information systems~Recommender systems}

\keywords{Multi-Task Fusion, Reinforcement Learning, Recommender System}


\maketitle

\section{Introduction}

\begin{figure}
    \centering
    \includegraphics[width=0.8\columnwidth,trim=0 0 0 0, clip]{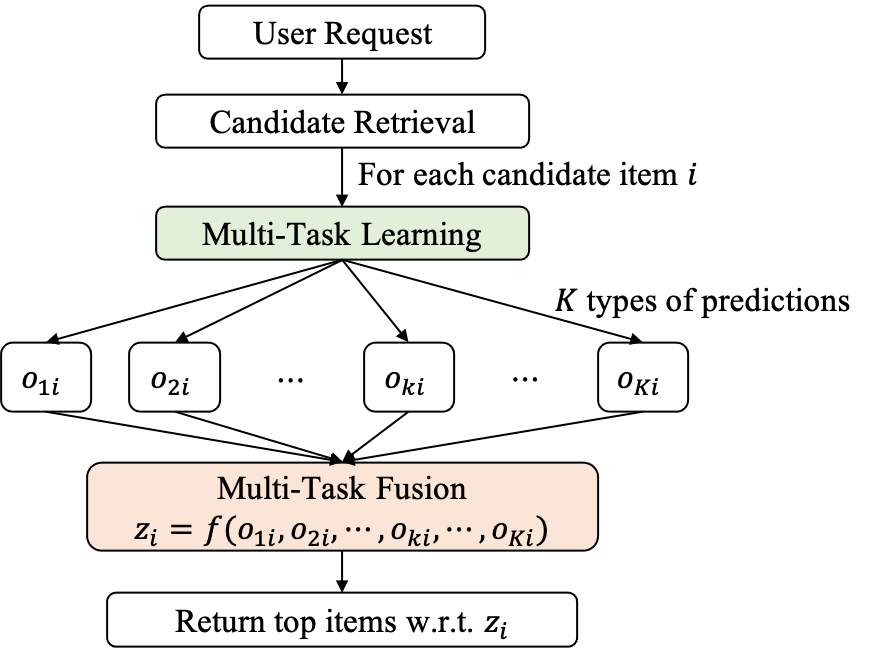}
    \vspace{-3mm}
    \caption{MTL and MTF in recommender systems.}
    \label{fig:mtf}
    \vspace{-4mm}
\end{figure}

\begin{figure*}
    \centering
    \includegraphics[width=0.8\linewidth,trim=0 10 0 0, clip]{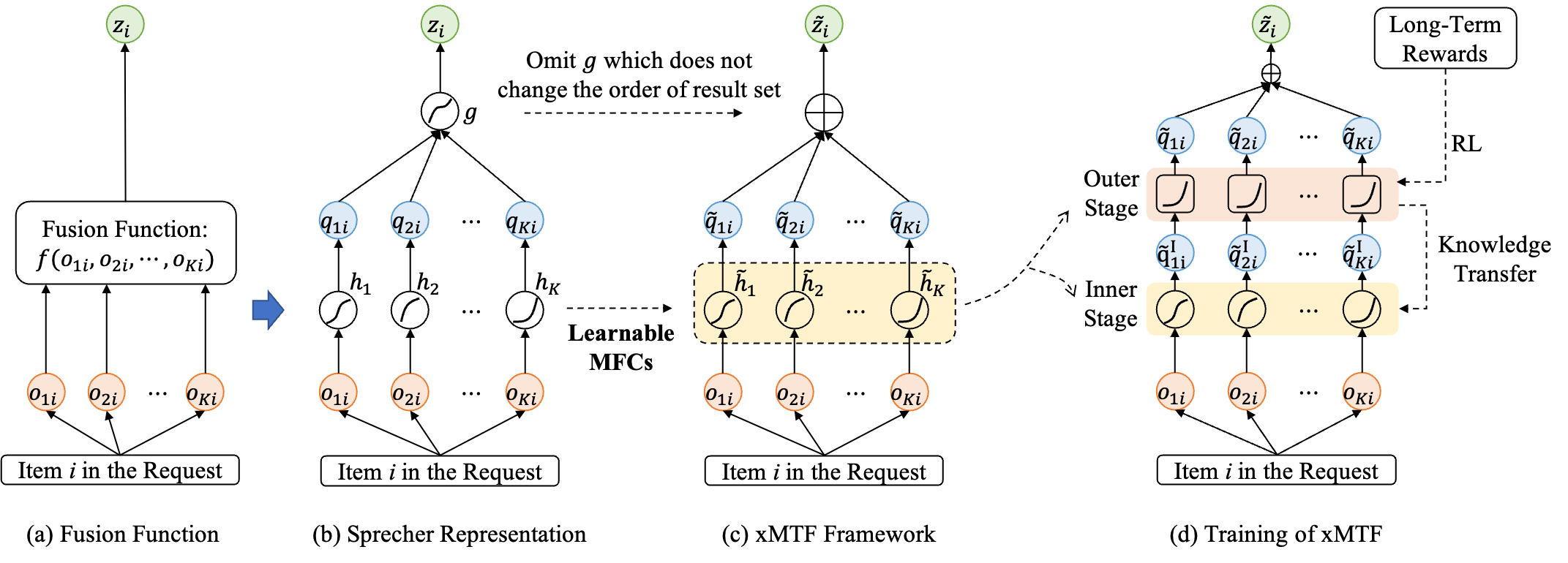}
    \vspace{-2mm}
    \caption{Overall structure of xMTF.}
    \label{fig:mfc}
    \vspace{-4mm}
\end{figure*}

Recommender systems are playing an increasingly important role in various platforms, e.g. E-commerce \cite{gu2020hierarchical,linden2003amazon,zhou2018deep}, videos \cite{tang2017popularity,wu2018beyond}, news\cite{liu2010personalized,zheng2018drn}, etc. Practical recommender systems often need to optimize various types of user feedback. For instance, on video platforms, user satisfaction is influenced by watch time and interactions such as likes, follows, and shares\cite{zhang2024unex,cai2023reinforcing}. In such scenarios, recommender systems must consider multiple feedback types to deliver the final recommendation. A typical recommender system handling multiple types of feedback has two components: a multi-task learning (MTL) module, predicting feedback such as click-through rate and like rate; and a multi-task fusion (MTF) module, integrating these predictions into a single score for item ranking\cite{zhang2022multi}. Figure \ref{fig:mtf} illustrates the relationship between MTL and MTF.

MTF is essential for ensuring user satisfaction with the recommender systems, as it directly affects recommendation outcomes. However, compared to MTL, which has been extensively studied \cite{ma2018modeling,tang2020progressive,su2024stem,yu2020gradient,yang2023adatask,chen2018gradnorm}, MTF poses more challenges. MTF integrates the predictions of multiple user feedback to generate a single score, reflecting the user's overall satisfaction. However, users typically
do not provide direct feedback indicating overall satisfaction with each item. Instead, overall satisfaction is indicated by long-term feedback such as session length, daily watch time, and retention, which cannot be directly linked to individual recommended items. 
Recently, there has been growing interest in using reinforcement learning (RL) for MTF\cite{changhua2019value, zhang2022multi, liu2024off, cai2023reinforcing, zhang2024unex, chen2024cache}. RL-based approaches regard users as the environment and the recommender system as the agent, treating fusion weights as RL actions. These approaches have proven effective in MTF by modeling long-term user satisfaction.

However, existing RL-based approaches are all \textit{formula-based MTF} approaches, i.e., to define a fusion formula $f(o_1, \cdots, o_K;a_1, $ $ \cdots a_K)$ with $K$ predictions $o_1, \cdots, o_K$ and the coefficients $\boldsymbol{a}=[a_1, \cdots, a_K]$, regarding the coefficients $\boldsymbol{a}$ as the actions of RL. Table \ref{table:typical-formula} shows typical fusion formulas in recent research. By pre-defining the fusion formula, we only need to optimize a few coefficients via RL, which makes RL learning easier. Nevertheless, there are several issues with pre-defined fusion formulas. First, different formulas result in varying recommendation outcomes, making it challenging to determine the best one, especially since different users might benefit from different formulas. Second, pre-defined formulas restrict the MTF search space, leading to inferior model performance.

\begin{table}
\caption{Typical fusion formulas in existing research.}
\vspace{-2mm}
\centering
\begin{tabular}{c|c|c}
    \hline\hline
    Index & Formula & Literature  \\
    \hline
    1 & $z_{i}=\sum_{k=1}^K a_ko_{ki}$ & \cite{cai2023reinforcing,zhang2024unex} \\
    \hline
    \multirow{2}{*}{2} & $z_{i} = \sum_{k=1}^K a_k\log\left(o_{ki} + \beta_k\right)$ & \multirow{2}{*}{\cite{zhang2022multi}} \\ 
    & ($\beta_k$ is pre-defined) &  \\
    \hline
    3 & $z_{i} = \prod_{k=1}^Ko_{ki}^{a_k}$ &\cite{changhua2019value}\\
    \hline\hline
\end{tabular}
\label{table:typical-formula}
\vspace{-3mm}
\end{table}

To address this issue, this paper provides a \textit{formula-free MTF} framework, called eXtreme MTF (xMTF), as shown in Figure \ref{fig:mfc}. We first show that any suitable fusion function (shown in Figure \ref{fig:mfc}(a)) can be expressed as a combination of single-variable monotonic functions for each prediction $o_{ki}$ (shown in Figure \ref{fig:mfc}(b)). Leveraging this, xMTF introduces a novel monotonic fusion cell (MFC) to learn these monotonic transformations of input prediction values, as shown in Figure \ref{fig:mfc}(c). MFCs have the following important features:
\begin{itemize}
    \item MFCs capture the \textbf{monotonicity} between the output and the input variables, which is an important intrinsic structure of MTF, making the MFCs interpretable.
    \item MFCs are \textbf{learnable}, replacing pre-defined fusion formulas in existing methods, thus expanding the MTF search space.
    \item The learnable MFCs provide \textbf{personalized fusion functions} for different users and predictions. In contrast, formula-based approaches can only personalize a few coefficients in pre-defined formulas.
\end{itemize}

The larger search space complicates xMTF training, and existing RL-based methods are not directly applicable. To overcome this, we provide a two-stage hybrid (TSH) training approach, shown in Figure \ref{fig:mfc}(d), comprising an RL-based outer stage and a supervised-learning-based inner stage. The outer stage contains fewer parameters as RL actions, learning long-term rewards, while the inner stage, with more parameters, learns the knowledge of the outer stage by supervised learning. TSH enables effective xMTF training, showing significant improvement over existing approaches.

In summary, the contributions of this paper are:
\begin{itemize}
    \item We propose xMTF, a formula-free MTF framework, to expand the search space of MTF. To the best of the authors' knowledge, we are the first to discuss a generalized MTF model rather than formula-based MTF approaches.
    \item We introduce learnable MFCs to replace pre-defined formulas in MTF. MFCs capture the intrinsic monotonicity of MTF and enable personalized fusion functions, leading to superior performance over formula-based approaches.
    \item We propose TSH to train xMTF effectively, addressing the challenge of increasing search space.
    \item Extensive offline and online experiments show the effectiveness of xMTF, and xMTF has been applied to our online system, serving over 100 million users.
\end{itemize}

\section{Related Work} \label{sec:related}
\subsection{Multi-Task Learning}
MTL simultaneously predicts multiple user feedback, e.g. click-through rate and like rate. Multi-gate Mixture-of-Experts (MMoE) \cite{ma2018modeling} employs multiple expert networks to capture task-specific patterns while sharing information. Progressive Layered Extraction (PLE) \cite{tang2020progressive} dynamically allocates shared and task-specific layers to enhance accuracy. There is also research addressing negative transfer in embedding learning\cite{su2024stem} and gradient conflicts\cite{yu2020gradient,yang2023adatask,chen2018gradnorm}.

MTL generates predictions for multiple types of user feedback, but it remains challenging to use these predictions to deliver the final recommendation. To solve this, recommender systems typically employ an MTF module after the MTL module to integrate predictions into a single ranking score.

\subsection{Multi-Task Fusion}
MTF creates a single score for each item to make final recommendations to enhance overall user satisfaction. MTF techniques differ from MTL as users do not give direct feedback on overall satisfaction with each item. Instead, users' overall satisfaction is indicated by long-term rewards. A common approach is to pre-define a fusion formula, e.g. formulas in Table \ref{table:typical-formula}, and find optimal coefficients using black-box optimization methods like grid search\cite{liashchynskyi2019gridsearchrandomsearch}, Bayesian optimization \cite{mockus1974bayesian}, or Cross-Entropy Method (CEM)\cite{rubinstein2004cross}.
These methods yield non-personalized coefficients, limiting their ability to reflect personal preferences. Recently, RL-based approaches have emerged to provide personalized coefficients, enhancing users' long-term rewards.
\citet{8950897} propose a Deep Reinforcement Learning based Ranking Strategy (DRRS) to maximize the platform’s cumulative reward by determining personalized coefficients in MTF. \citet{zhang2022multi} construct the BatchRL-MTF framework for MTF recommendation tasks to address issues like the deadly triad problem and extrapolation error problem of traditional off-policy applied in practical recommender systems. \citet{cai2023reinforcing} focus on users' retention in RL modeling, while \citet{zhang2024unex} consider RL-based MTF in multi-stage recommender systems.

However, existing non-RL and RL based MTF approaches rely on pre-defined fusion formulas and search only a limited number of coefficients. The pre-defined formulas restrict the MTF search space, resulting in suboptimal performance. This paper aims to address this limitation by introducing a formula-free MTF framework.

\section{Problem Formulation} \label{sec:problem-formulation}
We begin by presenting the RL modeling of MTF, then discuss the monotonicity property and the solutions offered by existing formula-based approaches. Finally, we discuss the challenges to highlight the need for our proposed solution.

\subsection{RL Modeling of MTF}
MTF aims to compute a single merged value for each item based on various predictions to enhance overall user satisfaction. This satisfaction is typically reflected in users' long-term rewards, such as total watch time and retention \cite{cai2023reinforcing,zhang2024unex,zhang2022multi}.

\begin{figure*}
    \centering
    \includegraphics[width=0.8\linewidth,trim=0 0 0 0, clip]{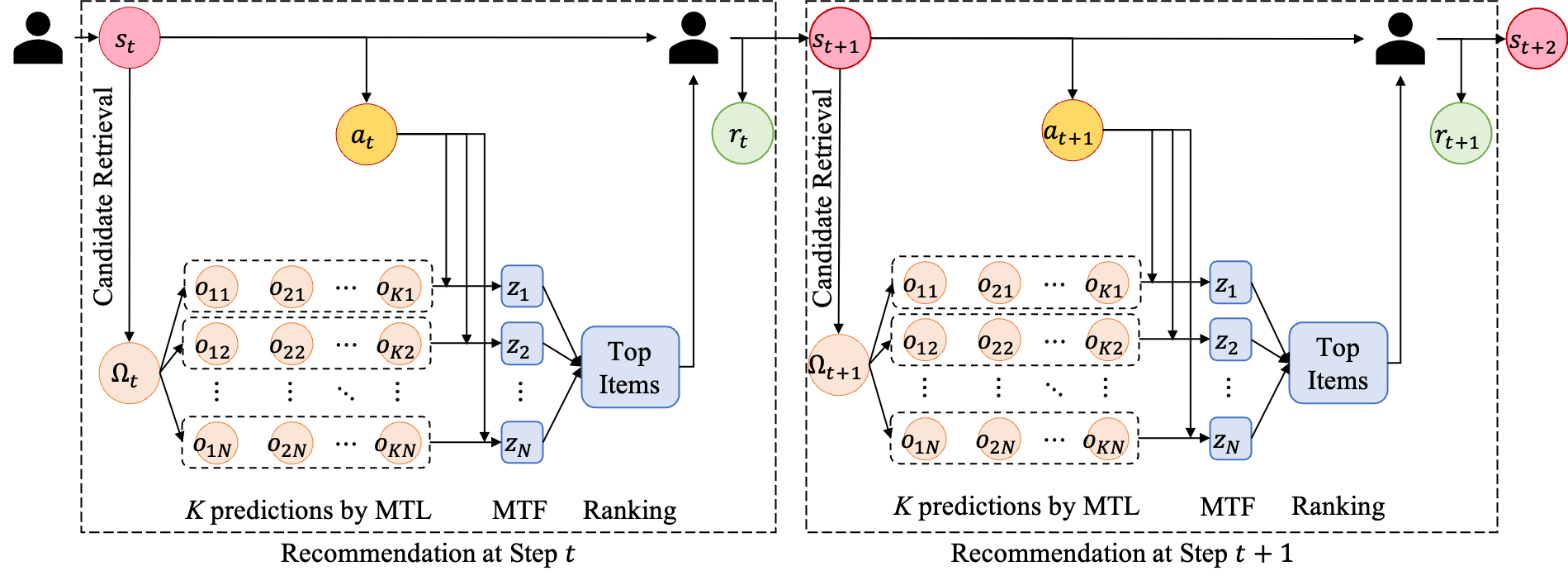}
    \vspace{-5mm}
    \caption{MDP Modeling of MTF.}
    \label{fig:mdp}
\end{figure*}

Recently, RL-based approaches have been applied to MTF \cite{zhang2022multi,changhua2019value,liu2024off}. RL models the interaction between users and recommender systems as a Markov decision process (MDP)\cite{sutton2018reinforcement} with $(\mathcal{S}, \mathcal{A}, \mathcal{P}, R, \rho_0, \gamma)$, where $\mathcal{S}$ is the state space, $\mathcal{A}$ is the action space, $\mathcal{P}:\mathcal{S}\times\mathcal{A}\to\mathcal{S}$ is the transition function, $R:\mathcal{S}\times\mathcal{A}\to\mathbb{R}$ is the reward function, $\rho_0$ is the initial state, and $\gamma$ is the discounting factor.
As shown in Figure \ref{fig:mdp}, when a user opens the app, a session begins, consisting of multiple requests until the user leaves. At Step $t$, the system obtains a user state $\boldsymbol{s}_t\in\mathcal{S}$ with a candidate set $\Omega_t$ from a candidate retrieval module. The prediction models provide $K$ predictions, i.e. $o_{ki}, 1\leq k \leq K$, for each item $i\in\Omega_t$. The predictions usually include click-through rate, like rate,  expected watch time, etc. Then, the MTF task merges the $K$ predictions into a single score:
\begin{equation} \label{eq:fusion-function-general}
    z_i = f\left(o_{1i},o_{2i}, \cdots,o_{ki},\cdots,o_{Ki};a_t\right), \forall i\in \Omega_t
\end{equation}
where $a_t\in\mathcal{A}$ is the parameter of $f$, which is the action of RL to be determined. After MTF, the system returns the top items to the user according to the fusion score $z_i$.
After watching the recommended items, the user provides feedback $r_t = R\left(\boldsymbol{s}_t,a_t\right)$. Then, the user transfers to the next state $\boldsymbol{s}_{t+1} \sim P\left(\boldsymbol{s}_t,a_t\right)$ and determines whether to send the next request or leave. MTF aims to find the optimal parameter $a_t$ (action of RL) to maximize the long-term reward:
\begin{equation} \label{eq:long-term-reward}
    \max_{a_t}R_t = \sum_{t'=t}^T \gamma^{t'-1} r_{t'}
\end{equation}
where $T$ is the step that the user leaves the app. The action $a_t$ can be modeled by the policy function $\mu$ from the user state $\boldsymbol{s}_t$:
\begin{equation} \label{eq:actor}
    a_t = \mu\left(\boldsymbol{s}_t;\xi\right)
\end{equation}
where $\xi$ is the parameters.
There are many RL models, e.g.,  DDPG\cite{lillicrap2015continuous}, TD3\cite{fujimoto2018addressing}, and SAC\cite{haarnoja2018soft}, to get the policy function $\mu$ and the action $a_t$. It seems that if we provide the parameterized form of the fusion function in Eq. \eqref{eq:fusion-function-general} and the parameters $a_t$ to be optimized, typical RL-based approaches can be applied to find the optimal $a_t$. However, the choice of the fusion function $f$ is a nontrivial problem, as will be discussed in the next subsection.

\subsection{Challenges} \label{sec:challenges}
As discussed, we need a parameterized form of the fusion function $f$ in Eq. \eqref{eq:fusion-function-general}, and take the parameters $a_t$ as RL actions. In MTF, the fusion function $f$ should have the following monotonicity property.

\textbf{Monotonicity Property of the Fusion Function}: the fusion function $f(o_{1i},\cdots, o_{Ki})$ is monotonically increasing with respect to each prediction $o_{ki}$, i.e., if we replace the $k$-th prediction $o_{ki}$ by $o'_{ki}$($o_{ki} < o'_{ki}$), with other input $o_{k'i,k'\neq k}$ unchanged, then we have
\begin{equation*}
    f(o_{1i},\cdots,o_{ki},\cdots,o_{Ki};a_t) \leq f(o_{1i},\cdots,o'_{ki},\cdots,o_{Ki};a_t)
\end{equation*}

The monotonicity is crucial because the user feedback in recommender systems, e.g., the click-through rate, usually positively relates to the user's satisfaction\cite{xu2024enhancing}. If monotonicity is violated, it may happen that the prediction score $o_{ki}$ increases while the fused score $z_i$ decreases, which does not align with the actual situation.

To guarantee the monotonicity of the fusion function $f$, existing methods\cite{cai2023reinforcing,zhang2022multi,zhang2024unex} all consider \textit{formula-based approaches}, i.e., to pre-define monotonic fusion formulas (see Table \ref{table:typical-formula}), and regard the coefficients $a_k$ as RL actions. In such settings, the number of actions equals the number of prediction types, i.e. $K$, which is very small, leading to a limited search space and inferior performance. Moreover, the choice of fusion formulas is also highly uninvestigated. 

Can we use a more complicated fusion function $f$ with more parameters to expand the search space? Here we face two challenges:
\begin{itemize}
    \item \textbf{Modeling of monotonicity property}. We need to find a suitable model that is sufficiently expressive and capable of capturing the monotonicity property of the fusion function.
    \item \textbf{Training difficulty}. If we use a complicated function $f$, then the parameter $a_t$, which is the RL action, will be very high dimensional, leading to an extremely high training difficulty\cite{zhong2024no}.  Therefore, we need to deal with the trade-off between the search space and the difficulty of training.  
\end{itemize}

To this end, Section \ref{sec:mfc} provides a \textit{formula-free MTF framework}, called xMTF, to capture the monotonicity property of MTF, while Section \ref{sec:training} discusses the effective training of xMTF.

\section{The xMTF Framework} \label{sec:mfc}
We present the xMTF framework, as illustrated in Figure \ref{fig:mfc}. First, we show that the fusion function can be represented as a composition of single-variable monotonic functions for each prediction. Leveraging this, we introduce learnable MFCs to describe these monotonic functions, capturing the fusion function in a very general sense.

\subsection{Representation of Fusion Function}
The key problem is to find a suitable model to capture the monotonicity property of the fusion function. Here we provide a fundamental proposition on the representation of the fusion function by simpler single-variable functions.
\begin{proposition}[Representation of Fusion Function] \label{prop:sprecher}
Suppose the fusion function $f(o_{1i},o_{2i},\cdots, o_{Ki})$ is monotonic increasing to each $o_{ki}$, then there exists a single-variable monotonic increasing function $g(\cdot)$, and $K$ single-variable monotonic increasing functions $h_k(\cdot),1\leq k \leq K$, so that the fusion function $f$ can be represented as
\begin{equation} \label{eq:g}
    f(o_{1i},o_{2i},\cdots, o_{Ki}) = g\left(\sum_{k=1}^K q_{ki}\right)
\end{equation}
where
\begin{equation} \label{eq:h}
    q_{ki} = h_k(o_{ki})
\end{equation}

\end{proposition}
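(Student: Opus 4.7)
The plan is to invoke Sprecher's representation theorem, cited in the abstract: any continuous multivariate function on a compact product domain admits a representation as a finite superposition of continuous single-variable functions. A direct application gives $f(o_1,\ldots,o_K) = \sum_{q} g_q\bigl(\sum_k h_{q,k}(o_k)\bigr)$, so the first step is to state this classical result and identify its components in our setting.

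Next, I would exploit the monotonicity hypothesis via a positional-encoding argument tailored to the proposition's form. I would pick strictly increasing $h_k$ whose ranges lie in disjoint digit blocks of a suitable base-$b$ expansion, so that $t = \sum_k h_k(o_k)$ uniquely encodes the tuple $(o_{1i},\ldots,o_{Ki})$. Then I would define $g(t)$ to equal $f$ evaluated at the tuple decoded from $t$. Monotonicity of $g$ on the image of the encoding follows immediately: since $f$ is monotonically increasing in each coordinate, increasing any $o_{ki}$ strictly increases both $t$ and $f(o_{1i},\ldots,o_{Ki})$, so $g$ is consistent and monotonic wherever it is defined by the encoding.

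The principal obstacle, and the step I expect to be the hardest, is two-fold. First, Sprecher's classical statement uses a \emph{sum} of outer functions, whereas Proposition 4.1 asserts a single outer $g$; collapsing the sum requires either invoking a monotonic refinement of Sprecher's theorem or substituting the encoding--decoding construction sketched above for a direct Sprecher application. Second, the constructed $g$ is initially defined only on the image of the encoding, which is typically a measure-zero Cantor-type subset of $\mathbb{R}$. Extending $g$ to a monotonic function on the full real line while preserving the identity $f = g(\sum_k h_k(o_{ki}))$ requires a careful completion step, for instance defining $g$ at new points as the supremum of $g$ over encoding values not exceeding $t$, and verifying that this extension is consistent with the monotonicity of $f$.
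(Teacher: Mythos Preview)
Your overall strategy matches the paper's: encode the tuple $(o_{1i},\ldots,o_{Ki})$ injectively as $t=\sum_k h_k(o_{ki})$ with strictly increasing $h_k$, set $g$ equal to $f$ composed with the decoder, and then argue that $g$ is monotone. The paper, however, does not build the encoding by hand. It quotes a form of the Sprecher representation theorem (citing K\"oppen) stating that the Sprecher inner functions already yield a bijection whose inverse is \emph{coordinate-wise non-decreasing} in $y$: as $y$ grows, no decoded $x_k$ drops. With that property granted, monotonicity of $g$ is a one-line consequence of the monotonicity of $f$, and the paper neither collapses a sum of outer functions nor performs any extension step.

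The genuine gap in your sketch is at the monotonicity step, not at the extension step you flagged as hardest. Your sentence ``increasing any $o_{ki}$ strictly increases both $t$ and $f$, so $g$ is consistent and monotonic'' only verifies single-coordinate perturbations. What you actually need is that whenever $t<t'$ on the image of the encoding, the decoded tuples satisfy $o_k\le o_k'$ for \emph{every} $k$, so that the coordinate-wise monotonicity of $f$ can be invoked. A disjoint-digit-block encoding fails this: if $h_1$ occupies a low block and $h_2$ a high block, then passing from a point with $h_1$ near the top of its block to one with $h_2$ one unit higher and $h_1$ reset near the bottom gives $t<t'$ while $o_1>o_1'$ and $o_2<o_2'$. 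The hypothesis on $f$ then says nothing about the order of $f(o)$ and $f(o')$, so $g(t)\le g(t')$ is unproved. In short, the property the paper extracts from the cited theorem---a coordinate-wise monotone decoder---is exactly what your positional encoding lacks, and without it the argument that $g$ is monotone does not go through.
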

\begin{proof}
    This proposition is a corollary of the Sprecher Construction \cite{koppen2005universal}. We leave it to Appendix. \ref{appendix:sprecher}.
\end{proof}

This representation is depicted in Figure \ref{fig:mfc}(b). 
Proposition \ref{prop:sprecher} demonstrates that any fusion function can be decomposed into single-variable monotonic functions. This representation unifies and expands existing fusion formulas shown in Table \ref{table:typical-formula}. We put the monotonic functions $g$ and $h_k$ of these formulas in Table \ref{table:representation-formula}. Different choices of $g$ and $h_k$ result in different fusion formulas.

Proposition \ref{prop:sprecher}
ensures the existence of monotonic representations for any suitable fusion function. Therefore, by making these monotonic functions learnable, we can significantly expand the search space of MTF, which will be discussed in Section \ref{sec:mfc:mfc-detail}.

\begin{table}
\caption{Representation of existing formulas in Table \ref{table:typical-formula}.}
\vspace{-3mm}
\centering
\begin{tabular}{c|c|c}
    \hline\hline
    Index
    & \multirow{2}{*}{$g$} & \multirow{2}{*}{$h_k$}  \\
    in Table \ref{table:typical-formula} &&\\
    \hline
    1 & $g(x)=x$ & $h_k(o_k)=a_ko_k$\\
    \hline
    2 & $g(x)=x$ & $h_k(o_k)=a_k\log(o_k+\beta_k)$  \\
    \hline
    3 &$g(x)=\exp(x)$&$h_k(o_k)=a_k\log(o_k)$\\
    \hline\hline
\end{tabular}
\vspace{-3mm}
\label{table:representation-formula}
\end{table}

\subsection{xMTF with Monotonic Fusion Cells} \label{sec:mfc:mfc-detail}
First, note that the increasing function $g$ in Eq. \eqref{eq:g} does not need to be modeled, as it does not affect the ordering of the result set (we only need the top items by fusion score!). Therefore, we only need to model the monotonic function $h_k$ for each type of prediction.

Here we define the monotonic fusion cell (MFC) for the $k$-th prediction $o_{ki}$ as a learnable parameterized function with the user state $\boldsymbol{s}_t$ and the prediction $o_{ki}$ as the inputs:
\begin{equation} \label{eq:q_define}
    \tilde{q}_{ki} = \tilde{h}_k\left(o_{ki}, \boldsymbol{s}_t;\theta_k\right)
\end{equation}
where $\theta_k$ represents the parameters. To ensure the MFC's monotonicity, we use an auxiliary pairwise loss during training:
    \begin{equation} \label{eq:MFC-Monotonic-loss}
        \mathcal{L}_{k}^{\text{mono}} = \sum_{i,j\in \Omega _t}\textbf{1}_{o_{ki}<o_{kj}}\max\left\{0,\tilde{h}_k\left(o_{ki},\boldsymbol{s}_t;\theta_k\right)-\tilde{h}_k\left(o_{kj},\boldsymbol{s}_t;\theta_k\right)\right\}
    \end{equation}
where $\textbf{1}_{o_{ki}<o_{kj}}$ equals 1 when $o_{ki}<o_{kj}$, and equals 0 otherwise.

Given the MFCs $\tilde{q}_{ki}$, we can define a new fusion function $\tilde{f}$ as:
\begin{equation} \label{eq:new-fusion-function}
    \tilde{z}_i = \tilde{f}\left(o_{1i},o_{2i},\cdots,o_{Ki}\right) = \sum_{k=1}^K \tilde{h}_k\left(o_{ki}, \boldsymbol{s}_t;\theta_k\right)
\end{equation}
We have removed the monotonic function $g$ in Eq. \eqref{eq:g} as previously discussed. By learning the optimal parameter $\theta_k$, Eq. \eqref{eq:new-fusion-function} can generate the final result set without requiring pre-defined fusion formulas used in existing approaches. The model structure of xMTF with MFCs is depicted in Figure \ref{fig:mfc}(c).

Before presenting the training algorithm, we would like to discuss a few points of MFCs.

\begin{itemize}
\item MFCs enable \textbf{personalized fusion functions}. Compared to the monotonic function in Eq. \eqref{eq:h}, MFCs in Eq. \eqref{eq:q_define} include the user state $\boldsymbol{s}_t$ as an additional input. If we regard $\tilde{h}_k$ as a monotonic transformation of the prediction variable $o_{ki}$, the input $\boldsymbol{s}_t$ allows for personalized transformation for different requests. Recall that existing approaches focus on personalizing only a few coefficients under pre-defined (non-personalized) fusion formulas. In contrast, MFCs extend the personalized weights to personalized functions.
\item As discussed above, MFCs can be interpreted as extensions to the existing fusion formulas in Table \ref{table:typical-formula}, which means MFCs capture the \textbf{intrinsic monotonicity structure} of MTF. Moreover, this monotonicity of MFCs enables the development of more effective training methods, which will be discussed in Section \ref{sec:training}.
\end{itemize}

\section{Training of xMTF} \label{sec:training}
This section presents xMTF's training, i.e. to determine $\theta_k, 1\leq k \leq K$ in Eq. \eqref{eq:q_define}. We propose a novel two-stage hybrid (TSH) training approach, illustrated in Figure \ref{fig:tsh}, to tackle the training difficulty, which contains an RL stage with a few actions and a supervised knowledge transfer stage with more learnable parameters.

\subsection{Overall Framework}
Recall that our objective is to maximize the long-term user experience $R_t$ in Eq. \eqref{eq:long-term-reward}. Therefore, we use the RL-based approach, as illustrated in Section \ref{sec:problem-formulation}. To reduce the difficulty of RL training, we must decrease the number of actions in RL. Thus, we cannot treat all $\theta_k, 1\leq k \leq K$ as actions, since each $\theta_k$ includes many parameters.

\begin{figure}
    \centering
    \includegraphics[width=0.85\columnwidth,trim=0 0 0 0, clip]{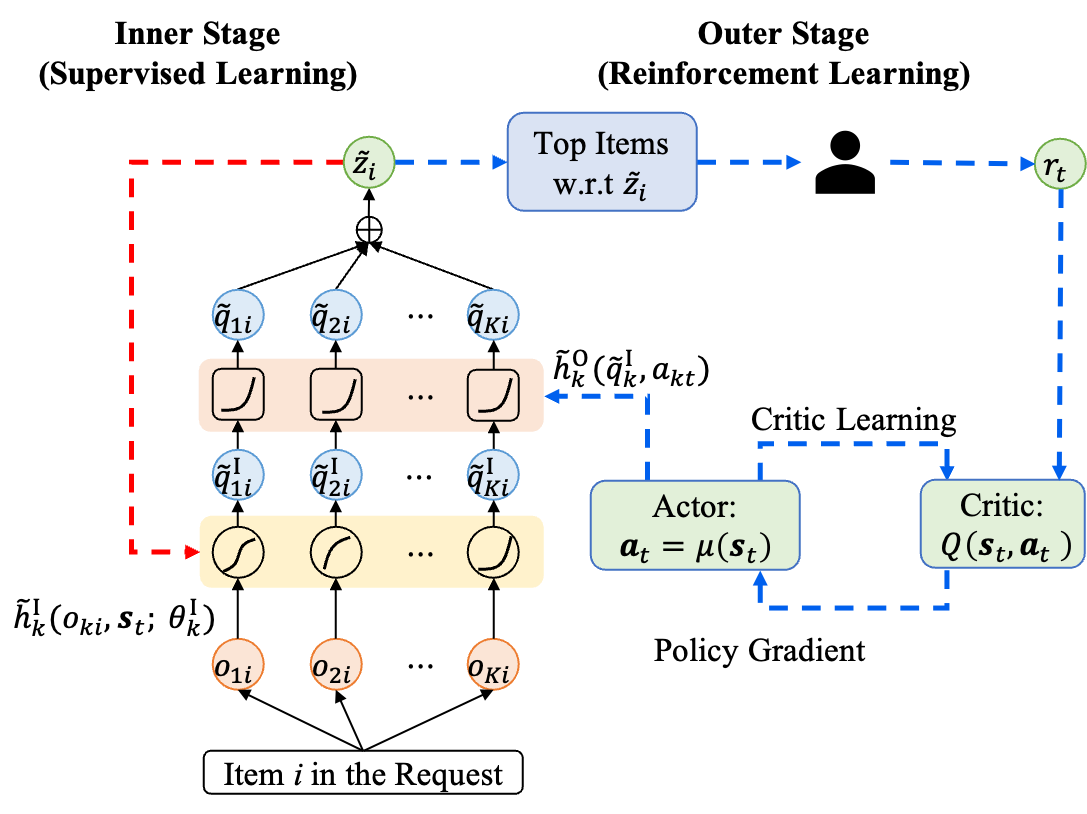}
    \vspace{-4mm}
    \caption{The TSH training approach.}
    \vspace{-3mm}
    \label{fig:tsh}
\end{figure}

Here, we develop a novel two-stage hybrid (TSH) training approach, as shown in Figure \ref{fig:tsh}. The key idea is to divide the MFCs into two cascade monotonic parts. Specifically, we decompose the parameter $\theta_k$ into two parts by defining $\theta_k = (\theta_k^{\text{I}}, a_k)$, and separate the MFCs in Eq. \eqref{eq:q_define} into two monotonic stages:
\begin{align}
    \tilde{q}_{ki}^{\text{I}} &= \tilde{h}_k^{\text{I}}\left(o_{ki},\boldsymbol{s}_t;\theta_k^{\text{I}}\right) \label{eq:inner_stage}\\
    \tilde{q}_{ki} &= \tilde{h}_k^{\text{O}}\left(\tilde{q}_{ki}^{\text{I}};a_k\right) \label{eq:outer_stage}
\end{align}

We denote $\tilde{h}_k^{\text{I}}$ as the inner stage and $\tilde{h}_k^{\text{O}}$ as the outer stage. The primary difference between the two stages is the number of parameters. The inner stage $\tilde{h}_k^{\text{I}}$ includes a large number of parameters in $\theta_k^{\text{I}}$ to ensure sufficient expressiveness, whereas the outer stage $\tilde{h}_k^{\text{O}}$ only includes a few parameters in $a_k$. Note that choosing a simple function for $\tilde{h}_k^{\text{O}}$ does not reduce the expressiveness of MFCs. This is because, given any original function $\tilde{h}_k$, we can find a corresponding $\tilde{h}_k^{\text{I}}$ to satisfy Eq. \eqref{eq:inner_stage}\eqref{eq:outer_stage}. Actually, we just need to set $\tilde{h}_k^{\text{I}} = \alpha\left(\tilde{h}_k\right)$, where $\alpha\left(\cdot\right)$ is the inverse function of $\tilde{h}_k^{\text{O}}$ w.r.t. $\tilde{q}_{ki}^{\text{I}}$.

We train the two stages by different methods. For the outer stage with a few parameters ($a_k$), we use RL-based approaches to optimize the long-term user experience. In contrast, for the inner stage with many parameters ($\theta_k^{\text{I}}$), we apply supervised learning to ensure good convergence. However, the supervised learning of the inner stage is still challenging since the labels needed for learning are not directly provided by users (note that the users do not provide integrated feedback for each item to guide the learning of $\tilde{h}_k^{\text{I}}$!)

We adopt knowledge transfer to overcome this. We use the output of the outer stage as the label for the inner stage, as shown in Figure \ref{fig:tsh}. In this setup, \textbf{the outer stage continuously adjusts the inner stage's output based on long-term user experience, while the inner stage absorbs new knowledge from the outer stage}. This method allows the inner stage to learn about long-term user satisfaction without directly using RL-based techniques.

Now we are ready to provide the final training algorithm of xMTF, as shown in Algorithm \ref{alg:training}, of which the details are provided in the next two subsections.

\begin{algorithm*}
\caption{The training process of xMTF.}
\label{alg:training}
\begin{algorithmic}[1]
\STATE Input: training data (replay buffer) $\left\{\boldsymbol{s}_{1:T},\boldsymbol{a}_{1:T}, r_{1:T}, \boldsymbol{o}_{1:T}\right\}$ for each user.

\STATE Output: The inner stage $\tilde{h}_k^{\text{I}}\left(o_{ki},\boldsymbol{s}_t;\theta_k^{\text{I}}\right)$ of MFCs, parameterized by $\theta_k^{\text{I}}$; a policy  $\mu\left(\boldsymbol{s}_{t}; \xi\right)$ used in the outer stage of MFCs, parameterized by $\xi$; a critic function $Q\left(\boldsymbol{s}_{t}, \boldsymbol{a}_{t}; \phi \right)$ parameterized by $\phi$.
\FOR{each user session with $T$ requests from the replay buffer}
    \FOR{$t = 1,\cdots, T$}
        \STATE \textbf{Data preparation}: Collect the reward $r_t$, the predictions $\boldsymbol{o}_t$ and the action $\boldsymbol{a}_t$ from the replay buffer.
        \STATE \textbf{Critic learning of the outer stage}: $\phi \leftarrow \phi - \alpha \nabla_{\phi}\mathcal{L}^{\text{critic}}(\phi)$, where $\mathcal{L}^{\text{critic}}(\phi)$ is defined in Eq. \eqref{eq:critic-loss}, and $\alpha$ is the learning rate.
        \STATE \textbf{Actor learning of the outer stage}: $\xi \leftarrow \xi - \beta \nabla_{\xi}J$, where $\nabla_{\xi}J$ is defined in Eq. \eqref{eq:actor-gradient}, and $\beta$ is the learning rate.
        \STATE \textbf{Learning of the inner stage}: $\theta_k^{\text{I}}  \leftarrow  \theta_k^{\text{I}} - \eta \nabla_{\theta_k^{\text{I}}}\mathcal{L}^{\text{\text{I}}}(\theta_k^{\text{I}})$, where $\mathcal{L}^{\text{\text{I}}}(\theta_k^{\text{I}})$ is defined in Eq. \eqref{eq:loss-SL}, and $\eta$ is the learning rate.
    \ENDFOR
\ENDFOR

\end{algorithmic}
\end{algorithm*}

\subsection{Outer Stage}
The outer stage is a simple parameterized function $\tilde{h}_k^{{\text{O}}}$ with a few parameters $a_k$.  In practice, we just use a second-order function:
\begin{equation} 
\tilde{h}_k^{{\text{O}}}\left(\tilde{q}_{ki}^{\text{I}};a_k\right) = \tilde{q}_{ki}^{\text{I}}\left(1+a_k \tilde{q}_{ki}^{\text{I}}\right)
\end{equation}
In this case, $a_k$ is a scalar for each $k$. 

We regard $a_k$ as the RL action to be optimized. The choice of the specific RL algorithm is orthogonal to the contribution of this paper, and we adopt a typical actor-critic structure. Specifically, we define the action vector $\boldsymbol{a}= \{a_1, \cdots, a_K\}$, and the actor network generates the action $\boldsymbol{a}$ by a parameterized function:
\begin{equation} \label{eq:actor-function} 
    \boldsymbol{a}_{t} = \mu\left(\boldsymbol{s}_{t}; \xi\right)
\end{equation}
where the subscript $t$ in $\boldsymbol{a}_t$ means the action depends on the time step $t$, and $\xi$ is the parameter.

Then we define the critic function $Q\left(\boldsymbol{s}_{t}, \boldsymbol{a}_{t}; \phi \right)$ to estimate the long-term reward $R_t$. The loss of critic learning is
\begin{equation}  \label{eq:critic-loss}
\mathcal{L}^{\text{critic}}(\phi) = \left[Q\left(\boldsymbol{s}_t,\boldsymbol{a}_t;\phi\right) - \left(r_t + \gamma Q\left(\boldsymbol{s}_{t+1},\mu\left(\boldsymbol{s}_{t+1};\xi^-\right);\phi^-\right)\right)\right]^2
\end{equation} 
where $\xi^-$ is the parameter of the target actor, and $\phi^-$ is the parameter of the target critic. Furthermore, the policy gradient of parameter $\xi$ is calculated by
\begin{equation} \label{eq:actor-gradient}
\nabla_\xi J = \nabla_\xi\mu(\boldsymbol{s}_t;\xi)\nabla_\mu Q\left(\boldsymbol{s}_t, \mu(\boldsymbol{s}_t;\xi)\right)
\end{equation}

\subsection{Inner Stage}
The inner stage aims to learn from the outer stage to capture long-term user experience. We define the sum of the inner stage as:
\begin{equation}
    \tilde{z}_i^{\text{I}} = \sum_{k=1}^K\tilde{q}_{ki}^{\text{I}}
\end{equation}
Compared to the actual output $\tilde{z}_i$ defined in Eq. \eqref{eq:new-fusion-function}, the sum $\tilde{z}_i^{\text{I}}$ excludes the outer stage $\tilde{h}_k^{\text{O}}$. Then, we use $\tilde{z}_i$ to guide the learning of $\tilde{z}_i^{\text{I}}$, to transfer the knowledge from the outer stage to the inner stage. Specifically, we define a pairwise loss, i.e., the BPR loss\cite{rendle2012bpr}, to ensure the consistency of the ranking order between $\tilde{z}_i^{\text{I}}$ and $\tilde{z}_i$:

\begin{equation} \label{eq:approx-loss}
    \mathcal{L}^{\text{transfer}} = - \sum_{i,j\in \Omega_t}\textbf{1}_{\tilde{z}_i<\tilde{z}_j} \log{\sigma(\tilde{z}_j^{\text{I}} - \tilde{z}_i^{\text{I}})}
\end{equation}
where the $\tilde{z}_i$ is regarded as the label, and $\sigma(\cdot)$ is the sigmoid function.
Eq. \eqref{eq:approx-loss} means that the inner stage aims to absorb the knowledge from the outer stage, which is achievable because the inner stage has significantly greater expressiveness than the outer stage. Consequently, in TSH, the outer and inner stages continuously interact: the outer stage adjusts the fusion results to align with long-term user experience, while the inner stage continuously learns from the outer stage.

Moreover, we re-define the monotonicity loss in Eq. \eqref{eq:MFC-Monotonic-loss} for the inner stage:
\begin{equation} \label{eq:MFC-Monotonic-loss-inner}
        \mathcal{L}_{k}^{\text{mono,\text{I}}} = \sum_{i,j\in \Omega_t}\textbf{1}_{o_{ki}<o_{kj}}\max\left\{0,\tilde{h}_k^{\text{I}}\left(o_{ki},\boldsymbol{s}_t;\theta_k^{\text{I}}\right)-\tilde{h}_k^{\text{I}}\left(o_{kj},\boldsymbol{s}_t;\theta_k^{\text{I}}\right)\right\}
    \end{equation}

The final loss of the inner stage is
\begin{equation} \label{eq:loss-SL} 
        \mathcal{L}^{\text{I}} = \lambda \sum_{k=1}^K\mathcal{L}^{\text{mono,\text{I}}}_k +  (1 - \lambda) \mathcal{L}^{\text{transfer}} 
\end{equation}
where $\lambda$ is the hyper-parameter to balance the final loss.

\section{Experiments} \label{sec:experiments}

We consider the following research questions (RQs):
\begin{itemize}
\item \textbf{RQ1}: How does xMTF perform compared to other state-of-the-art MTF methods?

\item \textbf{RQ2}: Does MFCs improve the performance of xMTF? Can we interpret the outputs of MFCs?
\item \textbf{RQ3}: How does each part of the TSH training method affect the performance of xMTF?
\item \textbf{RQ4}: Can xMTF improve the performance of MTF tasks in real-world online recommender systems?
\end{itemize}
\subsection{Offline Experiment Settings}
\subsubsection{Dataset and Metrics}
We choose KuaiRand \cite{gao2022kuairand} as the offline experimental dataset. This public dataset from Kuaishou contains 27,285 users and 32,038,725 items, providing contextual features of users and items, along with various user
feedback signals. We consider six types of user feedback: click, long view, like, follow, comment, and share, and the statistics of the feedback are shown in Table \ref{tab:predictions} in Appendix. \ref{appendix:hyper-params}. The performance of the MTL model is not the main focus of this paper, so we use MMoE\cite{ma2018modeling}, a widely-used MTL model, to generate the predictions for user feedback on each item. Once predictions are obtained, MTF merges them into a single final score and then returns the top items to the user. This section investigates the performance of different MTF methods\footnote{The code can be referred to at https://github.com/zxcvbnm678122/xMTFwww2025/}.

To emulate user behavior upon receiving the recommended items, we construct an offline simulator to act as the environment, simulating user interaction with the recommender system. When the simulator receives recommended items for the current request, it generates user feedback and decides whether to send the next request. We define the following exit rules \cite{xue2023prefrec}: if users have exhausted their satisfaction, they will leave the session. An early exit may lead to worse recommendation performance.

In the abovementioned experimental settings, we choose a long-term reward, i.e., the \textbf{Total Watch Time} of all the items in a complete session, as the evaluation metric.

\subsubsection{Details}
The user state $\boldsymbol{s}_t$ includes the user profile, the behavior history, and the
request context. In xMTF framework, for the inner stage, we use a multi-layer perception (MLP) to model the function $\tilde{h}_k^{\text{I}}\left(o_{ki},\boldsymbol{s}_t;\theta_k^{\text{I}}\right)$ defined in Eq. \eqref{eq:inner_stage}, where $\theta_k^{\text{I}}$ are the parameters in MLP; for the outer stage, we use RL to model the function $\tilde{h}_k^{\text{O}}\left(\tilde{q}_{ki}^{\text{I}};a_k\right) $ defined in Eq. \eqref{eq:outer_stage}, where $a_k$ are the actions of RL. To ensure fairness, we keep the same network architecture for the actors and critics across all compared methods, which consists of a five-layer MLP. The detailed hyper-parameters are shown in Table \ref{tab:hyper-params} in Appendix. \ref{appendix:hyper-params}.
For each experiment, we conduct 20 trials to calculate the mean performance and standard deviations.

\subsubsection{Baselines}
\begin{itemize}
\item \textbf{Cross Entropy Method (CEM)} \cite{rubinstein2004cross}: a black-box optimization method commonly used for hyper-parameter optimization. CEM searches the (non-personalized) optimal parameters in pre-defined fusion formulas. Two kinds of formulas are tested with CEM.
\begin{itemize}
\item \textbf{CEM-1}: Adopt the first formula in Table \ref{table:typical-formula}, i.e., $z_{i}=\sum_{k=1}^K a_ko_{ki}$, where $a_k$ is the parameter to be optimized.
\item \textbf{CEM-2}: Adopt the second formula in Table \ref{table:typical-formula}, i.e., $z_{i} = \sum_{k=1}^K a_k\log\left(o_{ki} + \beta_k\right)$, where $a_k$ is the parameter to be optimized.
\end{itemize}

\item \textbf{TD3}\cite{fujimoto2018addressing}/\textbf{BatchRL-MTF}\cite{zhang2022multi}/\textbf{TSCAC}\cite{cai2023two}/\textbf{MR-MPL}\cite{jeunen2024multi}: A series of RL method for formula-based MTF in recommender systems. Like CEM, we apply two kinds of formulas to these methods, called \textbf{TD3-1}/\textbf{TD3-2}/\textbf{BatchRL-MTF-1}/\textbf{BatchRL-MTF-2}/\textbf{TSCAC-1}/\textbf{TSCAC-2}/\textbf{MR-MPL-1}/\textbf{MR-MPL-2}.

\item \textbf{xMTF}: our proposed formula-free MTF framework which does not rely on a pre-defined formula.

\item \textbf{xMTF w/o outer (inner) stage}: xMTF without the outer (inner) stage in TSH, which will be discussed in Section \ref{sec:experiment:training}.

\end{itemize}

\subsection{Performance Comparison (RQ1)}
Table \ref{table:offline exp results} shows the results of different methods. CEM searches global coefficients for all users, performing significantly worse than RL-based methods. This highlights the importance of providing personalized coefficients for different users and considering long-term user satisfaction to enhance MTF performance. For RL-based methods like TD3, the performance of TD3-2 is  better than that of TD3-1, showing that the fusion formulas do affect MTF performance. The proposed xMTF achieves the best performance among all methods because it does not rely on a pre-defined formula and provides a larger search space for better performance. 

\begin{table}[t]
\centering
\caption{The offline performance of different methods.}
\vspace{-3mm}
\begin{tabular}{c|c}
    \hline
    Methods & Total Watch Time (s) \\
    \hline
    CEM-1 &  897.5($\pm$12.3)  \\
    CEM-2 &  931.2($\pm$13.5)  \\
    \hline
    TD3-1 & 1088.7($\pm$13.7)  \\
    TD3-2 & 1129.1($\pm$14.8)  \\
    \hline
    BatchRL-MTF-1 & 1137.3($\pm$13.1)  \\
    BatchRL-MTF-2 & 1185.4($\pm$12.6)  \\
    \hline
    TSCAC-1 & 1153.3($\pm$12.8)  \\
    TSCAC-2 & 1194.7($\pm$12.4)  \\
    \hline
    MR-MPL-1 & 1145.3($\pm$11.8)  \\
    MR-MPL-2 & 1189.6($\pm$12.3)  \\
    \hline
    
    \textbf{xMTF} & \textbf{1279.7($\pm$12.9)}  \\
     xMTF w/o outer stage & 1092.8($\pm$9.1) \\
     xMTF w/o inner stage & 1106.3($\pm$11.2) \\
    \hline
\end{tabular}
\vspace{-3mm}
\label{table:offline exp results}
\end{table}

\subsection{Impacts of MFCs (RQ2)}
\subsubsection{Monotonicity of MFCs}
We visualize the MFCs in Figures \ref{fig:different user} and \ref{fig:different prediction}. Figure \ref{fig:different user} plots the outputs of the MFC ($\tilde{q}_{ki}$), in relation to the inputs of MFC ($o_{ki}$) for different users, where $k$ is the subscript corresponding to the prediction of the "long view" behavior. Figure \ref{fig:different prediction} plots the outputs and inputs of the MFC of a certain user across different predictions. Evidently, all functions exhibit monotonic properties, consistent with the fact that the fusion function should be monotonic with respect to the input variables. This monotonicity is, of course, a result of the monotonicity loss in Eq. \eqref{eq:MFC-Monotonic-loss-inner}.

\subsubsection{Personalized Fusion Functions Provided by MFCs}
MFCs capture different monotonic properties for different users and predictions. Specifically,
Figure \ref{fig:different user} shows the MFCs learned for different users are different, while Figure \ref{fig:different prediction} shows the MFCs learned for different predictions are also different. These personalized fusion functions are a direct result of the larger search space introduced by the learnable MFCs, which is a key reason why xMTF outperforms existing methods that rely on predefined formulas.
\begin{figure}
    \centering
    \includegraphics[width=0.66\columnwidth,trim=0 50 0 0]{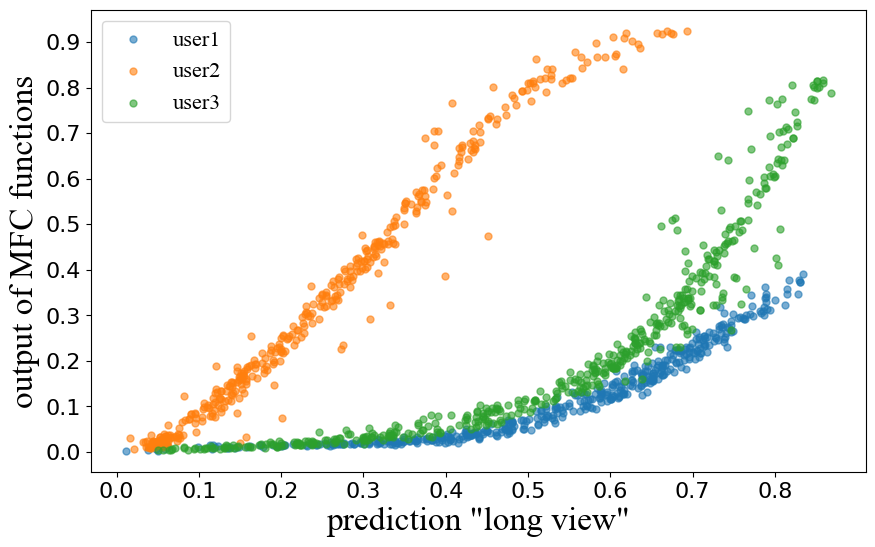}
    \vspace{-1mm}
    \caption{The MFC functions with respect to different users of the same prediction "long view".}
    \vspace{-4mm}
    \label{fig:different user}
\end{figure}
\begin{figure}
    \centering
    \includegraphics[width=0.66\columnwidth,trim=0 50 0 0]{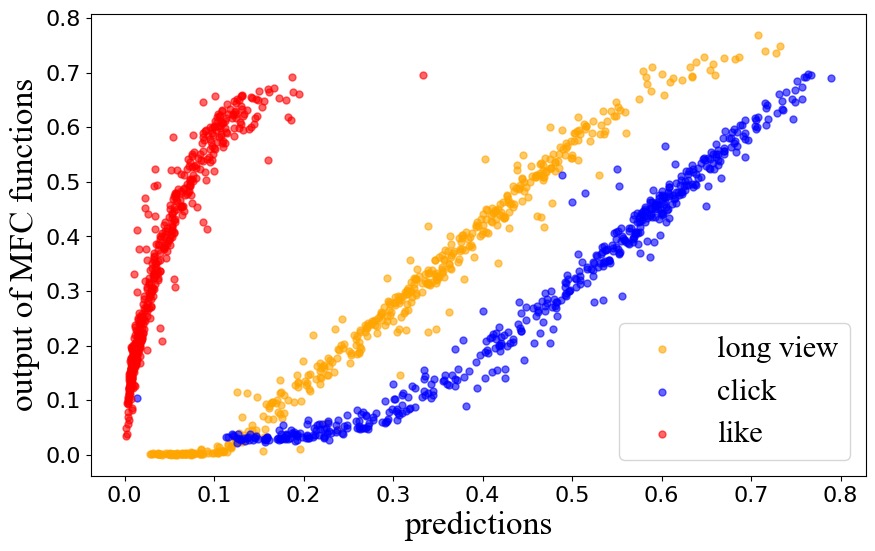}
    \caption{The MFC funtions with respect to different predictions of the same user.}
    \vspace{-5mm}
    \label{fig:different prediction}
\end{figure}

\begin{figure}[htbp]
	\centering
	\begin{minipage}{0.49\linewidth}
		\centering
		\includegraphics[width=0.9\linewidth]{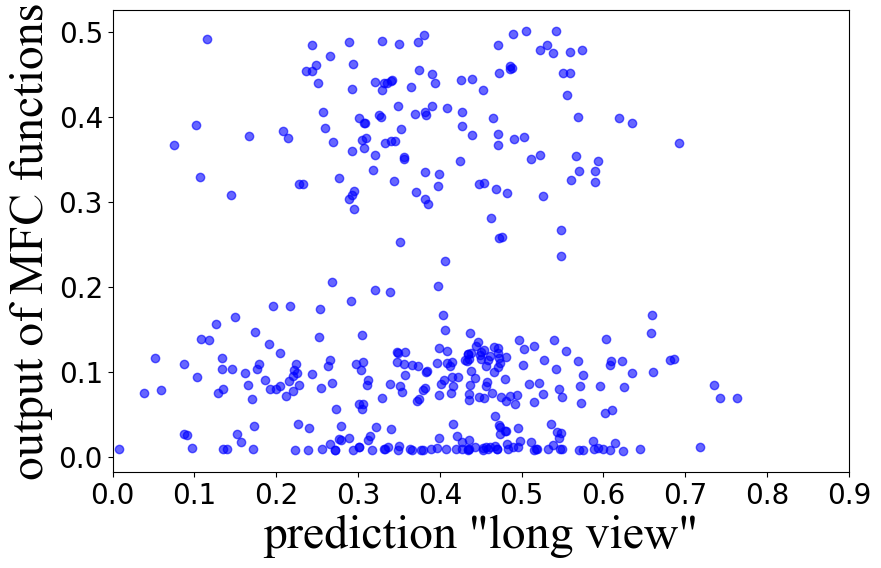}
		\centerline{(a) $\lambda = 0$}
	\end{minipage}
	\begin{minipage}{0.49\linewidth}
		\centering
		\includegraphics[width=0.9\linewidth]{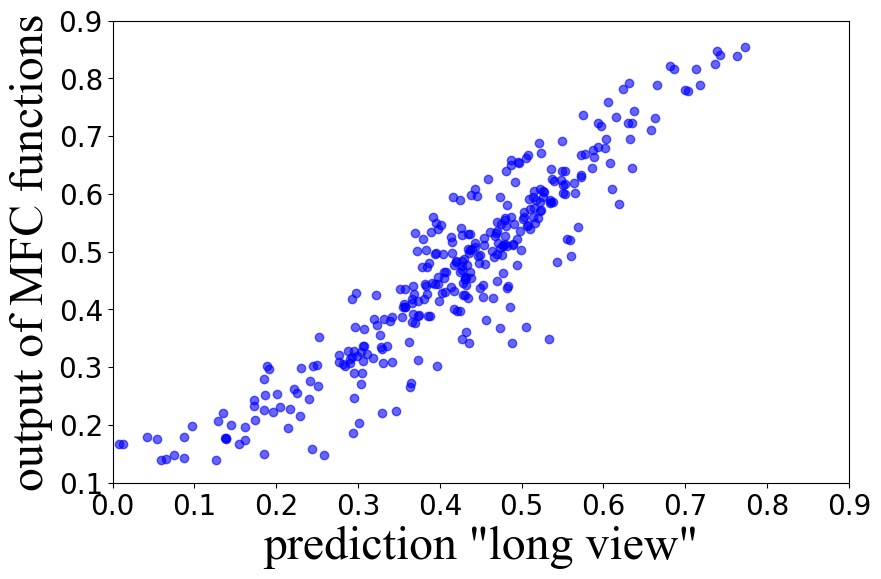}
		\centerline{(b) $\lambda = 0.1$}
	\end{minipage}
	
	\begin{minipage}{0.49\linewidth}
		\centering
		\includegraphics[width=0.9\linewidth]{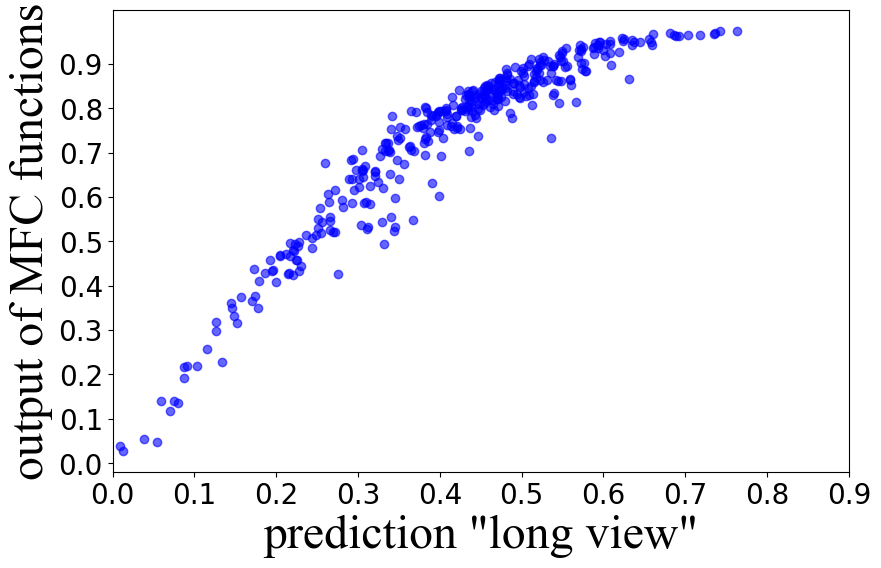}
            \centerline{(c) $\lambda = 0.4$}
	\end{minipage}
	\begin{minipage}{0.49\linewidth}
		\centering
		\includegraphics[width=0.9\linewidth]{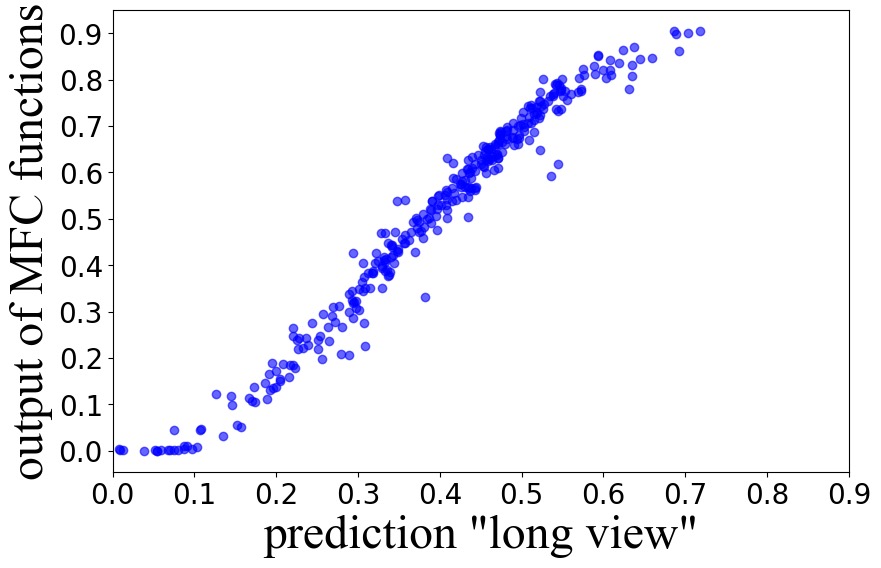}
		\centerline{ (d) $\lambda = 0.9$}
	\end{minipage}
 \vspace{-3mm}
 \caption{The MFC functions with respect to prediction "long view" and different hyper-parameters $\lambda$.}
 \vspace{-5mm}
 \label{fig:different parameter}
\end{figure}

\begin{table}[t]
\centering
\caption{Performance under different hyper-parameter $\lambda$.}
\vspace{-3mm}
\begin{tabular}{c|c}
    \hline
    Hyper-Parameter $\lambda$ & Total Watch Time (s) \\
    \hline
    $\lambda = 0$ &  732.8($\pm$18.2)  \\
    \hline
    $\lambda = 0.1$ &  1073.2($\pm$14.6)  \\
    \hline
    $\lambda = 0.4$ &  1279.7($\pm$12.9)  \\
    \hline
    $\lambda = 0.9$ &  1254.4($\pm$12.5)  \\
    \hline
    $\lambda = 1$ &  1103.1($\pm$16.3)  \\
    \hline
\end{tabular}
\label{table:mfc exp results}
\end{table}

\subsubsection{Impacts of the Monotonicity Loss on the Performance}
We have shown that MFCs capture monotonicity, but does the monotonicity have a real impact on performance? To illustrate this, we conduct experiments under different hyper-parameter $\lambda$ in Eq. \eqref{eq:loss-SL}, which describes the importance of monotonicity loss $\mathcal{L}^{\text{mono,{\text{I}}}}$ defined in Eq. \eqref{eq:MFC-Monotonic-loss-inner}. $\lambda = 0$ means that we do not apply any monotonicity loss for xMTF, while $\lambda = 1$ indicates that we focus solely on the monotonicity loss. 
Figure \ref{fig:different parameter} shows the outputs of MFCs w.r.t. the predictions of "long view" under different values of $\lambda$. Figure \ref{fig:different parameter}(a) shows no monotonicity between the inputs and the outputs of the MFCs under $\lambda = 0$. Table \ref{table:mfc exp results} shows that the performance when $\lambda =0$ is much worse than those in other experiments, highlighting the necessity of the monotonicity captured by MFCs. $\lambda = 0.4$ achieves the best performance in Table \ref{table:mfc exp results}, and it is the hyper-parameter we ultimately selected. Compared with $\lambda = 0.1$, $\lambda = 0.4$ exhibits better monotonicity, as shown in Figures \ref{fig:different parameter}(b)(c), thus achieving a better result. $\lambda = 0.9$ shows a little better monotonicity than $\lambda = 0.4$, as shown in Figures \ref{fig:different parameter}(c)(d), but it weakens the effect of supervised loss $\mathcal{L}^{\text{transfer}}$ which reflects the user satisfaction. Therefore, $\lambda = 0.9$ also leads to a worse result than $\lambda = 0.4$. In addition, $\lambda = 1$ achieves a worse result due to the absence of supervised loss $\mathcal{L}^{\text{transfer}}$.

\subsection{Impacts of the Training Method (RQ3)} \label{sec:experiment:training}
This subsection provides the ablation studies to show the effectiveness of the inner and outer stages in the proposed TSH method.

Firstly, we remove the outer stage while retaining the inner stage (denoted by \textbf{xMTF w/o outer stage}), meaning we do not use RL to model long-term user satisfaction. The total watch time, shown in Table \ref{table:offline exp results} is 1092.8 s, much worse than the result of xMTF, demonstrating the necessity of considering long-term user satisfaction.

Moreover, we remove the inner stage while retaining the outer stage (denoted by \textbf{xMTF w/o inner stage}), meaning we remove the most expressive component of xMTF. In this scenario, xMTF actually degenerates into a formula-based MTF model with the formula $z_i = \sum_{k=1}^K o_{ki}(1 + a_k o_{ki})$. As shown in Table \ref{table:offline exp results}, the total watch time is 1106.3 s, worse than xMTF, highlighting the advantages of our proposed xMTF model over a formula-based MTF model.

\subsection{Online Experiments (RQ4)}
\begin{figure}
    \flushleft
    \includegraphics[width=0.8\columnwidth]{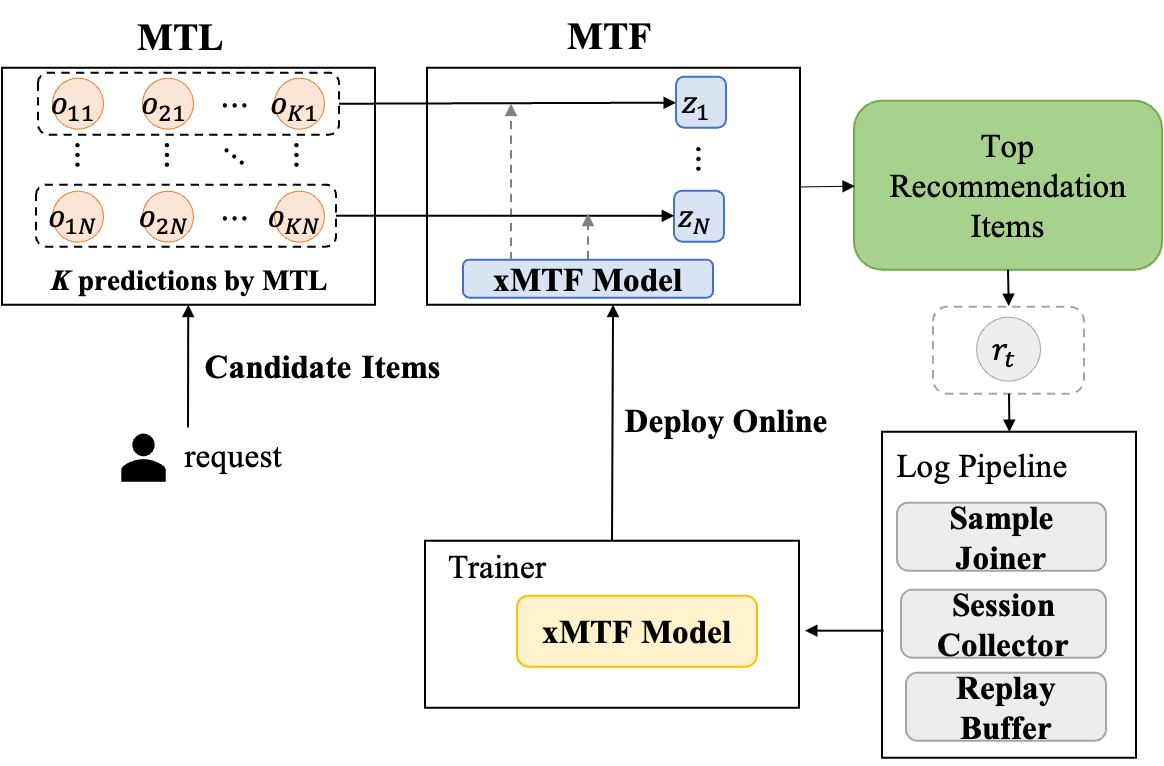}
    \vspace{-4mm}
    \caption{Online System Implementation.}
    \vspace{-5mm}
    \label{fig:system implementation}
\end{figure}
We assess the performance of xMTF on a popular short video platform with over 100 million users. This platform considers user feedback types such as effective view, long view, complete playback, expected watch time, like, follow, comment, share, and profile visit for each item. The MTF on this platform combines predictions of these feedback types into a final score to select videos with the highest user satisfaction. We adopt users' \textbf{Daily Watch Time} as the evaluation metric because it serves as a long-term reward, reflecting overall user satisfaction with all recommended videos. This metric is also widely used in existing research \cite{zhang2024unex,chen2019top,zhan2022deconfounding}. Besides daily watch time, we also examine play counts and interactions.

The xMTF structure in the online system is depicted in Figure \ref{fig:system implementation}. The xMTF model is continuously trained in a streaming manner. Specifically, when a user exits a session, the session's data is immediately sent to the xMTF model for training, and the updated model is deployed online for the MTF task in recommender systems. In practice, the xMTF model converges within two days when trained from scratch, and it is then continuously trained and updated online, serving users on the short video platform.

In the online experiments, to compare the performance between the baseline and xMTF, we randomly divide users into two equal-sized groups: the baseline group and the experimental group. The most recent baseline before deploying xMTF was UNEX-RL\cite{zhang2024unex}, a state-of-the-art formula-based MTF model considering multiple stages in recommender systems. The performance of UNEX-RL over the previous baselines (CEM\cite{rubinstein2004cross}, TD3\cite{fujimoto2018addressing}) is shown in Appendix \ref{appendix:old-baselines}. Here, we focus on discussing the performance gain of xMTF over the most recent baseline, UNEX-RL.

We evaluated the performance gain of xMTF over the baseline model for 7 consecutive days. Table \ref{table:online exp results} shows the performance gains of xMTF over the baseline, along with confidence intervals. xMTF achieves significantly better performance, with a 0.833\% increase in daily watch time and notable improvements in other evaluation metrics. The changes in some interaction metrics, e.g., like, follow, do not exceed the confidence intervals, which are not listed in Table \ref{table:online exp results}. It is important to note that a 0.1\% improvement in daily watch time is statistically significant on our platform. A 0.833\% improvement has been one of the largest improvements this year\footnote{Please refer to Appendix \ref{appendix:old-baselines} for performance gains from previous MTF experiments.}.

\begin{table}[t]
\centering
\caption{The online performance of xMTF.}
\vspace{-3mm}
\begin{tabular}{c|c}
    \hline
      & Performance Gain of \textbf{xMTF}   \\
    \hline
    Daily Watch Time & \textbf{+0.833\%} \ [-0.11\%, 0.11\%]\\
    \hline
    Play Counts &   \textbf{+0.583\%} \ [-0.14\%, 0.14\%]  \\
    \hline
    Comment  & \textbf{+2.391\%} \ [-1.26\%, 1.26\%]  \\
    \hline
    Share  &  \textbf{+2.205\%} \ [-0.81\%, 0.81\%] \\
    \hline
\end{tabular}
\vspace{-3mm}
\label{table:online exp results}
\end{table}

\section{Conclusion} \label{sec:conclusion}
This paper proposes a formula-free multi-task fusion (MTF) framework, called eXtreme MTF (xMTF), for maximizing long-term user satisfaction in recommender systems. The xMTF framework utilizes a monotonic fusion cell (MFC) to capture the monotonicity property of MTF, eliminating the need for pre-defined formulas. To address the training challenges posed by the larger search space in the xMTF framework, a two-stage hybrid (TSH) learning method is developed to train xMTF effectively. Extensive offline and online experiments demonstrate the effectiveness and advantages of the xMTF framework compared to existing formula-based MTF methods. The xMTF model has been fully deployed in our online system, serving over 100 million users.

\bibliographystyle{ACM-Reference-Format}
\bibliography{mfc}

\appendix

\section{Proof of Proposition \ref{prop:sprecher}} \label{appendix:sprecher}
We first quota the Sprecher Representation Theorem\cite{koppen2005universal}:
\begin{theorem}[Sprecher Representation Theorem]
For any integer $K \geq 2$ and a domain $I$, there exists real-valued, strictly monotone functions $h_k(x),1\leq k \leq K$, so that 
\begin{enumerate}
    \item The function $y=h(x_1,x_2,\cdots,x_K)=\sum_{k=1}^Kh_k(x_k)$ is a bijection between $y\in I$ and $\left(x_1,x_2,\cdots,x_K\right)\in I^K$, and when $y$ increases, each $x_k$ will not decrease. We call the $h_k$ the Sprecher construction.
    \item Any continuous function of $n$ variables $f(x_1,x_2,\cdots,x_n)$ with domain $[0,1]^K$ can be represented in the form
\begin{equation}
f(x_1,x_2,\cdots,x_n)=g(y) = g\left[\sum_{k=1}^Kh_k\left(x_k\right)\right]
\end{equation}
with a (usually non-continuous) function $g$.
\end{enumerate}

\end{theorem}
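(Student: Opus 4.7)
The plan is to invoke the Sprecher Representation Theorem (already stated in the paper's appendix) to obtain the decomposition $f = g\bigl(\sum_k h_k(o_{ki})\bigr)$ with strictly increasing single-variable $h_k$, and then to upgrade the resulting outer function $g$ from merely ``some function'' to a \emph{monotonic increasing} function, using the hypothesis that $f$ is monotonic increasing in each argument. The monotonicity of $h_k$ is already delivered by Sprecher, so the only new content to prove is the monotonicity of $g$.

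The key leverage is a property of the Sprecher construction that is often overlooked: the map $(o_{1i},\ldots,o_{Ki}) \mapsto y := \sum_{k=1}^K h_k(o_{ki})$ is a bijection between $I^K$ and its image in $I$, and this bijection has the monotone coupling property that whenever $y$ increases, \emph{each} preimage coordinate $o_{ki}$ does not decrease. Concretely, I would proceed in the following steps. First, apply the Sprecher Representation Theorem to the (assumed continuous) $f$ on the relevant domain to produce strictly increasing $h_1,\ldots,h_K$ and a function $g$ satisfying $f(o_{1i},\ldots,o_{Ki}) = g\bigl(\sum_k h_k(o_{ki})\bigr)$. Second, fix any $y_1 < y_2$ in the image and let $(o_{1i}^{(1)},\ldots,o_{Ki}^{(1)})$ and $(o_{1i}^{(2)},\ldots,o_{Ki}^{(2)})$ be their respective (unique) preimages under the Sprecher map. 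By the monotone coupling property, $o_{ki}^{(1)} \le o_{ki}^{(2)}$ for every $k$. Third, apply the coordinatewise monotonicity of $f$ along the chain
\begin{equation*}
f\bigl(o_{1i}^{(1)},o_{2i}^{(1)},\ldots,o_{Ki}^{(1)}\bigr) \le f\bigl(o_{1i}^{(2)},o_{2i}^{(1)},\ldots,o_{Ki}^{(1)}\bigr) \le \cdots \le f\bigl(o_{1i}^{(2)},o_{2i}^{(2)},\ldots,o_{Ki}^{(2)}\bigr),
\end{equation*}
upgrading one coordinate at a time. Rewriting the left and right endpoints via the Sprecher decomposition gives $g(y_1) \le g(y_2)$, so $g$ is monotonic increasing on the image of the Sprecher map. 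Finally, since the $h_k$ themselves are monotonic increasing, the inner sum $\sum_k h_k(o_{ki})$ is monotonic increasing in each $o_{ki}$, which matches the desired form of the proposition.

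The main obstacle is not the algebraic chain above but ensuring the monotone coupling property of the Sprecher bijection is in force; this is precisely the clause in the Sprecher theorem stating ``when $y$ increases, each $x_k$ will not decrease.'' I would make this citation explicit so the reader sees that the coordinatewise non-decrease is what lets the proof pull monotonicity of $f$ through to monotonicity of $g$. A secondary minor point is that $g$ is only defined (and only needs to be monotonic) on the image of the Sprecher sum, which is what is required for the representation identity; it can then be extended monotonically to the full real line in any convenient way if desired, but no such extension is needed for the statement of Proposition~\ref{prop:sprecher}. Continuity of $f$, which Sprecher requires, is a natural assumption for MTF fusion functions in practice and I would note this briefly rather than belabor it.
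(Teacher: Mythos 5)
Your proposal takes essentially the same route as the paper: both treat the Sprecher theorem as a cited black box and derive the monotonicity of $g$ by combining the theorem's coupling clause (``when $y$ increases, each $x_k$ will not decrease'') with the coordinatewise monotonicity of $f$. Your version is correct and in fact more explicit than the paper's one-line argument, since you spell out the coordinate-by-coordinate chain of inequalities that the paper leaves implicit.
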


\begin{proof}[Proof of Proposition \ref{prop:sprecher}]
We only need to prove the monotonicity of the function $g$, and the only extra feature to be considered is the monotonicity property of the function $f$. Specifically, as $y$ increases, each $x_k$ will not decrease. According to the monotonicity property of $f$, this ensures that $g(y)=f(x_1,x_2,\cdots,x_n)$ is non-decreasing.
\end{proof}

\section{Data and Hyper-Parameters} \label{appendix:hyper-params}

\begin{table}[h]
\centering
\caption{The statistics of user feedback in KuaiRand.}
\begin{tabular}{c|c}
    \hline
    Feedback &  Sparse Ratio \\
    \hline
    click &  37.93\%  \\
    long view &  26.35\%  \\
    like & 1.51\%  \\
    follow & 0.12\%  \\
    comment & 0.25\%  \\
    share & 0.09\%  \\
    \hline
\end{tabular}
\label{tab:predictions}
\end{table} 

\begin{table}[h]
\centering
\caption{The hyper-parameters of xMTF.}
\begin{tabular}{l|l}
    \hline
    Hyper-parameter & Value \\ \hline
    Optimizer & Adam \\
    Actor Learning Rate &  0.0001 \\
    Critic Learning Rate & 0.0002\\
    Supervised Learning Rate & 0.0002\\
    Action Dimensions of Actor & 6 \\
    Discount Factor & 0.9 \\
    Replay Buffer Size & $1*10^{6}$ \\
    Train Batch Size & 1024 \\
    Fine-Tuning & True \\
    Normalized Observations & True\\
    Training Platform    &Tensorflow\\ \hline
\end{tabular}
\label{tab:hyper-params}
\end{table} 

\section{Previous Baselines} \label{appendix:old-baselines}
CEM was our earliest baseline model, and we sequentially deployed TD3, and UNEX-RL in our online system. The online performance gain of TD3 over CEM is shown in Table \ref{table:online td3}, and the online performance gain of UNEX-RL over TD3 is shown in Table \ref{table:online unex-rl}.

\begin{table}[t]
\centering
\caption{The online performance gain of TD3 over CEM.}
\begin{tabular}{c|c}
    \hline
     & Performance Gain of \textbf{TD3}   \\
    \hline
    Daily Watch Time & \textbf{+0.414\%} \ [-0.12\%, 0.12\%]\\
    \hline
    
\end{tabular}
\label{table:online td3}
\end{table}

\begin{table}[t]
\centering
\caption{The online performance gain of UNEX-RL over TD3.}
\begin{tabular}{c|c}
    \hline
     & Performance Gain of \textbf{UNEX-RL}   \\
    \hline
    Daily Watch Time & \textbf{+0.556\%} \ [-0.11\%, 0.11\%]\\
    \hline
    
\end{tabular}
\label{table:online unex-rl}
\end{table}

\end{document}